\documentclass{lmcs}

\theoremstyle{plain} 

\RequirePackage{listings}
\usepackage{hyperref}
\usepackage[capitalise,noabbrev]{cleveref}

\def\ba{\begin{array}}
\def\ea{\end{array}}
\def\bda{\begin{displaymath}\ba}
\def\eda{\ea\end{displaymath}}

\newcommand{\thdSym}{\tau}
\newcommand{\thdSymB}{\upsilon}
\newcommand{\thread}[1]{\ensuremath{\thdSym_{#1}}}

\newcommand{\lockE}[1]{\mathtt{lock}(#1)}
\newcommand{\unlockE}[1]{\mathtt{unlock}(#1)}
\newcommand{\forkE}[1]{\mathtt{fork}(#1)}
\newcommand{\joinE}[1]{\mathtt{join}(#1)}

\newcommand{\eventE}[1]{e_{#1}}

\newcommand{\LK}[1]{m_{#1}}

\newcommand{\LKA}{\LK1}
\newcommand{\LKB}{\LK2}

\def\conc{\mathbin{\cdot}}

\newcommand{\cond}[1]{\textbf{#1}}

\newcommand{\TrLt}[1]{<_{tr}^{#1}}
\newcommand{\TrLtEq}[1]{\leq_{tr}^{#1}}
\newcommand{\crpLt}[1]{<_{crp}^{#1}}
\newcommand{\crpLtEq}[1]{\leq_{crp}^{#1}}

\newcommand{\evts}[1]{\mathit{evts}(#1)}
\newcommand{\thds}[1]{\mathit{thds}(#1)}
\newcommand{\crp}[1]{\mathit{crp}(#1)}

\newcommand{\pos}[2]{\mathit{pos}_{#1}(#2)}
\newcommand{\projThd}[2]{\mathit{proj}_{#1}(#2)}

\newcommand{\exit}[2]{\mathit{exit}_{#1}(#2)}

\newcommand{\exitH}[2]{\mathit{find}_{#1}(#2)}

\newcommand{\PerThreadLH}[3]{LH_{#1}^{#2}(#3)}
\newcommand{\PerThreadCS}[3]{CS_{#1}^{#2}(#3)}

\newcommand{\LH}[2]{LH_{#1}(#2)}
\newcommand{\CS}[2]{CS_{#1}(#2)}

\begin{document}

\bibliographystyle{alphaurl}

\title[Critical Sections Are Not Per-Thread]{Critical Sections Are Not Per-Thread: A Trace Semantics for Lock-Based Concurrency}

\author{Martin Sulzmann\lmcsorcid{0000-0002-8165-3403}}

\address{Karlsruhe University of Applied Sciences, Moltkestrasse 30, 76133 Karlsruhe, Germany}	
\email{martin.sulzmann@gmail.com}  

\begin{abstract}
  \noindent

Locks are a standard mechanism for synchronizing concurrent threads.
The standard lock set construction assumes that critical sections are confined to a single thread,
and therefore only accounts for locks acquired within that thread.
Traditional definitions of critical sections implicitly assume that protected events belong to the same thread. We demonstrate that this assumption does not hold for general C/Pthread executions.

Using a trace model that captures the essence of C/Pthread programs,
we give a trace-based characterization of critical sections
that does not impose a per-thread restriction.
As a result, critical sections may span multiple threads.
Such \emph{multi-thread} critical sections arise naturally in real programs
and close a semantic gap in the standard lock set construction.
\end{abstract}

\maketitle

\section{Introduction}

\begin{figure}

  \begin{lstlisting}[language=C,mathescape,escapechar=|]
#include <pthread.h>
pthread_mutex_t m1 = PTHREAD_MUTEX_INITIALIZER;
pthread_mutex_t m2 = PTHREAD_MUTEX_INITIALIZER;
void* thread_m1_m2(void*) {
  pthread_mutex_lock(&m1);      |\label{ln:tid2-acq-l1}|
  pthread_mutex_lock(&m2);      |\label{ln:tid2-acq-l2}|
  pthread_mutex_unlock(&m2);    |\label{ln:tid2-rel-l2}|
  pthread_mutex_unlock(&m1);    |\label{ln:tid2-rel-l1}|
return NULL;
}
void* thread_m1(void*) {
  pthread_mutex_lock(&m1);      |\label{ln:tid3-acq-l1}|
  pthread_mutex_unlock(&m1);
return NULL;
}
int main() {
  pthread_t tid2, tid3;
  pthread_create(&tid2, 0, thread_m1_m2, 0); |\label{ln:fork-tid2}|
  pthread_mutex_lock(&m2);                   |\label{ln:tid1-acq-l2}|
  pthread_create(&tid3, 0, thread_m1, 0);
  pthread_join(tid3, 0);                     |\label{ln:join-tid3}|
  pthread_mutex_unlock(&m2);                 |\label{ln:tid1-rel-l2}|
}
  \end{lstlisting}

\caption{C/Pthread program.}
  \label{fig:real}
\end{figure}

We consider concurrent programs that make use of threads and locks.
Many program analyses for deadlock and data race detection rely on the notion
of a critical section, which informally describes the region of execution protected by a lock.
In the literature, critical sections are typically defined as per-thread regions
delimited by a lock acquisition and the corresponding release in the same thread.
This definition implicitly assumes that the events protected by a lock belong
to the same thread as the lock acquisition.
However, this assumption does not hold for general executions of C/Pthread programs.

For example, consider the program in~\cref{fig:real}.
We assume that threads are numbered where
thread~1 is the thread in which the \texttt{main} function executes.
At line~\ref{ln:fork-tid2} another thread~2 is created.
Thread~2 acquires lock~\texttt{m1} and then acquires lock~\texttt{m2}
before releasing the locks.
The lock acquire operation at line~\ref{ln:tid2-acq-l2}
is part of the critical section with entry point at line~\ref{ln:tid2-acq-l1}
and exit point at line~\ref{ln:tid2-rel-l1}.

Entry and exit points of critical sections are pairs of lock and unlock operations
on the same lock variable that belong to the same thread.
Any event that must occur between the entry and exit points
in all execution traces belongs to the critical section.
The common assumption
is that ‘must occur between’ coincides with thread order,
and therefore that critical sections are per-thread.
This assumption is implicit and fundamental to the standard lock set construction~\cite{Dinning:1991:DAA:127695:122767},
yet it is rarely stated explicitly in the literature.

The above implicit assumptions do not apply to all C/Pthread programs
as shown by our example in~\cref{fig:real}.
The thread creation operation at line~\ref{ln:fork-tid2}
and the join operation at line~\ref{ln:join-tid3}
guarantee that the lock acquire operation at line~\ref{ln:tid3-acq-l1}
is always in between the critical section
represented by the entry point at line~\ref{ln:tid1-acq-l2}
and the exit point at line~\ref{ln:tid1-rel-l2}.
However, entry/exit points belong to a \emph{distinct} thread.
We conclude that the common notion of a critical section found in the literature
is semantically incomplete with respect to execution traces of C/Pthread programs.
Therefore, the standard lock set construction
that computes the set of locks that are held (acquired) but not yet released
at a certain point in the execution is incomplete as well.

\paragraph{Contributions}

The contributions of this paper are primarily conceptual:

\begin{itemize}
\item We identify an implicit per-thread assumption in the standard definition of critical sections and lock sets, and show that it is semantically incomplete
  with respect to a trace model that captures the essence of C/Pthread programs.

\item We give the first trace-based semantics of critical sections and lock sets that is complete for
  a minimal trace model capturing C/Pthread execution traces.
   The semantics captures \emph{multi-thread} critical sections in execution traces.

\end{itemize}

\paragraph{Outline}

\cref{sec:events-traces} introduces a minimal trace model for C/Pthread programs.
\cref{sec:cs-ls} reviews the commonly used per-thread definition
of a critical section and lock set and present our more general
trace-based characterization of a critical section and lock sets
\cref{sec:related-work} discusses earlier works that
rely on critical sections and lock sets for program analysis purposes.
\cref{sec:conclusion} concludes.

\section{Events and Traces}
\label{sec:events-traces}

We describe the semantics of programs in terms of execution traces.
A trace is a list of events reflecting the interleaved execution of a concurrent program
where each event is connected to some concurrency primitive.
We consider a trace model with events to represent
lock/unlock operations on locks and fork/join operations on threads.
This minimal model captures the essence of C/Pthread programs.

\begin{defi}[Events and Traces]
  \label{def:run-time-traces-events}
  \bda{@{}l@{}c@{}lll@{}c@{}llr@{}}
  e & {} ::= {} & (\alpha,t,op) & \text{(events)}
\\    \alpha,\beta,\delta & {} ::= {} & 1 \mid 2 \mid \ldots & \text{(unique event ids)}
\\ \thdSym,\thdSymB & {} ::= {} & \thread{1} \mid \thread{2} \mid \ldots & \text{(thread ids)}
    \\ op & {} ::= {} & \forkE{\thdSym} \mid \unlockE{m}
      \mid \joinE{\thdSym} \mid \lockE{m}  & \text{(operations)}
    \\ T & {} ::= {} & [] \mid e : T & \text{(traces)}
  \eda
\end{defi}

\noindent
We write $[o_1,\dots,o_n]$
for a list of objects as a shorthand of $o_1:\dots:o_n:[]$ and use the
operator  ``${}\conc{}$'' for list concatenation.

An event $e$ is represented by a triple $(\alpha, \thdSym,op)$
where
$\alpha$ is a unique event identifier,
$op$ is an operation, and
$\thdSym$ is the thread id in which the operation took place.
Via the unique event identifier we can unambiguously distinguish among
events that occur in the same thread and result from the same operation.
For brevity, we often omit the event identifier when denoting events
and write $e = (\thdSym,op)$ instead of $e = (\alpha,\thdSym,op)$.
We write~$\forkE{\thdSym}$ for the creation of a new thread with thread id~$\thdSym$
and~$\joinE{\thdSym}$ for a join with a thread with thread id~$\thdSym$.
We write~$\lockE{m}$ and~$\unlockE{m}$ for locking and unlocking of some lock~$m$.

\begin{figure}

\bda{|l|l|l|l|}
\hline  & \thread{1} & \thread{2} & \thread{3}\\ \hline
\eventE{1}  & \forkE{\thread{2}}&&\\
\eventE{2}  & &\lockE{\LKA}&\\
\eventE{3}  & &\lockE{\LKB}&\\
\eventE{4}  & &\unlockE{\LKB}&\\
\eventE{5}  & &\unlockE{\LKA}&\\
\eventE{6}  & \lockE{\LKA}&&\\
\eventE{7}  & \forkE{\thread{3}}&&\\
\eventE{8}  & &&\lockE{\LKB}\\
\eventE{9}  & &&\unlockE{\LKB}\\
\eventE{10}  & \joinE{\thread{3}}&&\\
\eventE{11}  & \unlockE{\LKA}&&\\

 \hline \eda{}
\caption{Trace $T_1$ resulting from execution of the program in \cref{fig:real}.}
 \label{fig:trace-mt1}
 \end{figure}

\begin{exa}
\cref{fig:trace-mt1} shows some trace $T_1$ resulting
from execution of the program in \cref{fig:real}.
We assume that the main thread always has thread id~$\thread{1}$.
For ease of reading, we use a tabular notation for traces
where each row contains exactly one event, in the column of the executing thread.
The textual order (from top to bottom) reflects the observed temporal order of events.
We write $e_i$ to refer to the $i$th in $T_1$.
For example, $e_5 = (5,\thread{2},\unlockE{\LKA})$.
In more compact form, trace $T_1$ can be written as $[e_1,...,e_{11}]$.
\end{exa}

Traces are well formed and respect the program's semantics.
For example, a lock cannot be acquired twice etc.
In the below, we write $l,l'$ to refer to lock events and $u,u'$ to refer
to unlock events.
We write $e \in T$ to indicate that $T=[e_1,...,e_n]$ where $e = e_k$ for some $k$,
defining also that $\pos{T}{e} = k$.
For events $e, f \in T$ we define trace order: $e \TrLt{T} f$ if $\pos{T}{e} < \pos{T}{f}$.
We write $e \TrLtEq{T} f$ if $e \TrLt{T} f$ or $e=f$.

\begin{defi}[Well Formedness]
\label{d:WF}
  Trace $T$ is \emph{well formed} if all the following conditions are satisfied:
  \begin{description}
    \item[\cond{WF-Acq}]
      For all $l = (\thdSym,\lockE{m}), l' = (\thdSym',\lockE{m}) \in T$ where $l \TrLt{T} l'$, there exists $u = (\thdSym,\unlockE{m}) \in T$ such that $l \TrLt{T} u \TrLt{T} l'$.

   \item[\cond{WF-Rel}]
      For all $u = (\thdSym,\unlockE{m}) \in T$ there exists $l = (\thdSym,\lockE{m}) \in T$ such that $l \TrLt{T} u$ and there is no $u' = (\thdSym',\unlockE{m}) \in T$ with $l \TrLt{T} u' \TrLt{T} u$.

  \item[\cond{WF-Fork1}]
    For all $\thdSym \ne \thread{1}$ there exists at most one $e = (\thdSymB,\forkE{t}) \in T$ and
    for all $e = (\thdSymB,\forkE{\thdSym}) \in T$ we have that $\thdSym \ne \thread{1}$.

  \item[\cond{WF-Fork2}]
    For all $e = (\thdSym,op) \in T$ where $\thdSym \ne \thread{1}$
    there exists $f = (\thdSymB,\forkE{\thdSym}) \in T$
    where $f \TrLt{T} e$.

  \item[\cond{WF-Join1}]
    For all $e = (\thdSymB,\joinE{\thdSym}) \in T$ we have $\thdSymB \ne \thdSym$ and there exists
    $f=(\thdSym,op) \in T$.

  \item[\cond{WF-Join2}]
    For all $e = (\thdSymB,\joinE{\thdSym}) \in T$ and all $f = (\thdSym,op) \in T$
    we have $f \TrLt{T} e$.
  \end{description}
\end{defi}

\cond{WF-Acq} states that a previously acquired lock can only be acquired after it has been released.
Similarly, \cond{WF-Rel} states that a lock can only be released after it has been acquired but not yet released.
Both conditions require that the operation that
releases a lock must occur in the same thread
as the operation that acquired the lock.

\cond{WF-Fork1} states that a thread can be created at most once.
\cond{WF-Fork2} states that each thread except the main thread is preceded by a fork event.
\cond{WF-Join1} states that a join operation $\joinE{t}$ must occur
in a thread distinct from $t$ and there must be operations that are executed
in thread~$t$.
\cond{WF-Join2} states that all events from a joined thread must appear before the join event.

\begin{exa}
\label{ex:well-ill-formed}
  Trace~$T_1$ is well-formed.
  Consider the following traces where $e_i$ refer to the events in \cref{fig:trace-mt1}:
  \bda{lcl}
     T_2 & =  & [e_1,e_6,e_7,e_8,e_9,e_{10},e_{11},e_2,e_3,e_4,e_5]
     \\  T_3 & = & [e_1,e_6,e_7,e_8,e_9,e_{10},e_{11},e_2]
     \\ T_4 & = & [e_1,e_7,e_8,e_{10},e_6,e_{11}]
     \\ T_5 & = & [e_1,e_2,e_6]
     \\ T_6 & = & [e_2,e_3,e_4,e_5]

   \eda
  Traces~$T_2$, $T_3$ and $T_4$ are well-formed as well.
  Traces $T_5$ and $T_6$ are ill-formed.
  Trace $T_5$ violates \cond{WF-Acq} and trace $T_6$ violates \cond{WF-Fork2}.
\end{exa}

A trace represents one possible interleaving of concurrent events.
In theory, there can be as many interleavings as there are (well-formed) reorderings of
the original trace. However, not all trace reorderings are feasible
in the sense that they could be reproduced by executing the program with a different schedule.

\begin{exa}
Traces~$T_2$, $T_3$ and $T_4$ are well-formed reorderings of a subset of the events in $T_1$.
Traces~$T_2$ and~$T_3$ represent executions where a different schedule is taken compared to~$T_1$.
In case of trace $T_2$, the  operations in thread~$\thread{2}$ are executed last.
Trace $T_3$ is a variant of $T_2$ where we assume that not all operations
in thread~$\thread{2}$ are executed because
the main thread~$\thread{1}$ terminates immediately after $e_{11}$ and
thus all other threads are terminated as well.
Trace $T_4$ however results from completely different program run because
the order among events in a thread has changed.
In fact, we can argue there is no program run of the example in~\cref{fig:real}
that leads to trace~$T_4$.
\end{exa}

We characterize all trace reorderings of trace~$T$
that represent alternative executions under a different schedule
in terms of correctly reordered prefixes.
Reorderings must be correct in the sense that the reordered trace is well-formed
and maintains order among events per thread as in the original trace~$T$.
Reorderings do not have to cover all events, so we consider reordered prefixes of traces.
We make use of the following helper functions and definitions.

The \emph{projection} of~$T$ onto thread~$\thdSym$, denoted $\projThd{T}{\thdSym}$, restricts~$T$ to events~$e$ where $e=(\thdSym,op)$.
We say trace $T$ \emph{prefixes} trace $T'$ if $T' = T \conc T''$ for some trace $T''$.
We define $\thds{T} = \{ \thdSym \mid \exists e = (\thdSym,op)  \in T \}$.
For example, $\projThd{T_3}{\thread{2}} = [e_2]$ and
$\evts{T_3} = \{e_1,e_6,e_7,e_8,e_9,e_{10},e_{11},e_2\}$
and $\thds{T_1} = \{ \thread{1},\thread{2},\thread{3} \}$.

\begin{defi}[Correctly Reordered Prefix]
  \label{def:crp}
  Let $T$ be a well-formed trace.
  Then, trace ~$T'$ is a \emph{correctly reordered prefix} of~$T$ if the following
  conditions are satisfied:
  \begin{description}
    \item[\cond{CRP-WF}] $T'$ is well formed. 
    \item[\cond{CRP-PO}] For every~$\thdSym \in \thds{T'}$, $\projThd{T'}{\thdSym}$ prefixes~$\projThd{T}{\thdSym}$.
  \end{description}
  We write $\crp{T}$ to denote the set of correctly reordered prefixes of~$T$.
\end{defi}

\cond{CRP-PO} states that the order of events in a thread must be maintained.
\cond{CRP-WF} states well-formedness and thus guarantees the conditions in~Definition~\ref{d:WF}. For example, $T_2,T_3 \in \crp{T_1}$ and $T_4,T_5,T_6 \not\in \crp{T_1}$.

\section{Critical Section and Lock Set}
\label{sec:cs-ls}

We consider a specific program run represented by some well-formed trace $T$.
In trace $T$ we wish to identify critical sections that protect events.
The entry and exit points of a critical section are identified by
matching lock and unlock events~$l$ and $u$
where $l$ and $u$ operate on the same lock and
are nearest to each other.
By well-formedness events~$l$ and $u$ must belong to the same thread.
Hence, pairs of matching lock and unlock events can be easily
computed via a linear scan of the trace and we
write $\exit{T}{l}$ to denote the matching unlock event for lock event~$l$.
A critical section might be 'open' in the sense that
a lock event~$u$ might not have a matching unlock event~$l$.
In such cases we assume that $\exit{T}{l}$ equals~$\bot$.
The definition below formally defines $\exit{T}{l}$.

\begin{defi}[Entry and Exit Points of Critical Sections]
  Let $T$ be a well-formed trace with lock event $l=(\thdSym,\lockE{m}) \in T$.
  We refer to $l$ as the \emph{entry point} of a critical section
  on lock $m$. The \emph{exit point} of $l$ in trace~$T$
  is obtained via the function $\exit{T}{l}$ defined below.

  \bda{lcll}
  \exit{e:T}{l} & = &
   \left \{
   \ba{ll}
     \exitH{T}{\thdSym,m} & \mbox{if $e=l$}
   \\  \exit{T}{l} & \mbox{otherwise}
   \ea
    \right.
  \\
  \\
  \exitH{[]}{\thdSym,m} & = & \bot
  \\
  \exitH{e:T}{\thdSym,m} & = &
  \left \{
  \ba{ll}
     e &
  \mbox{if $e=(\thdSym,\unlockE{m})$}
  \\
  \exitH{T}{\thdSym,m} & \mbox{otherwise}
  \ea
  \right.
  \eda
  We assume that $\exit{T}{}$ and $\exitH{T}{}$ describe a family
  of functions indexed by some trace~$T$.
  Function $\exit{T}{l}$ scans the trace for entry point~$l$
  and then returns the exit point via helper function $\exitH{T}{\thdSym,m}$.
\end{defi}

\begin{exa}
  For trace~$T_1$ from~\cref{fig:trace-mt1}
  we find that
  $\exit{T_1}{e_6}=e_{11}$,
  $\exit{T_1}{e_8}=e_9$,
  $\exit{T_1}{e_2}=e_5$ and
  $\exit{T_1}{e_3}=e_4$.
  For trace $T_3 = [e_1,e_6,e_7,e_8,e_9,e_{10},e_{11},e_2]$
  from Example~\ref{ex:well-ill-formed} we find
  that $\exit{T_3}{e_6}=e_{11}$
  and $\exit{T_3}{e_8}=e_9$ and $\exit{T_3}{e_2} = \bot$.
  The last case shows that a lock event, here~$e_2$, might
  lack a matching unlock event.
\end{exa}

The exit point for each entry point is stable under trace reorderings.

\begin{lem}[Stability of Exit Points]
\label{lem:stable-exit}
Let $T$ be a well-formed trace with lock event $l = (t,\lockE{m}) \in T$.
If $\exit{T}{l} \ne \bot$
then for any reordering $T' \in \crp{T}$ where $\exit{T}{l} \in T'$
we have that $\exit{T}{l} = \exit{T'}{l}$.
If $\exit{T}{l} = \bot$
then for any reordering $T' \in \crp{T}$
we have that $\exit{T'}{l} = \bot$.
\end{lem}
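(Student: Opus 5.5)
The plan is to show that $\exit{T}{l}$ depends only on the projection of $T$ onto the thread $t$ of $l$. Then \cond{CRP-PO}, which says $\projThd{T'}{t}$ prefixes $\projThd{T}{t}$, does the rest. Well-formedness of $T'$ (\cond{CRP-WF}) is not needed beyond guaranteeing that $\exit{T'}{}$ is defined.

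\emph{Step 1 (projection lemma).} For every trace $S$, thread $t$ and lock $m$, I will show $\exitH{S}{t,m} = \exitH{\projThd{S}{t}}{t,m}$, by induction on $S$. In the case $S = e : S'$ there are two subcases. If $e$ is not in thread $t$, then $e \neq (t,\unlockE{m})$, so $\exitH{}{}$ skips $e$, and $e$ is also dropped by the projection. If $e$ is in thread $t$, both sides inspect $e$ first and then recurse on $S'$ and on $\projThd{S'}{t}$ respectively. Similarly, for a lock event $l$ in thread $t$, induction on $S$ gives $\exit{S}{l} = \exit{\projThd{S}{t}}{l}$. This uses only that $l$ survives projection and that, once $l$ is found, the remainder is handled by the claim for $\exitH{}{}$. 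Consequently $\exit{S}{l}$ is the first event of the form $(t,\unlockE{m})$ that follows $l$ in $\projThd{S}{t}$, or $\bot$ if there is none. I read $\exit{S}{l}=\bot$ when $l \notin S$, i.e.\ the scan reaches $[]$. Because event identifiers are unique, events of $T$ and of $T'$ can be compared by identity.

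\emph{Step 2 (case $\exit{T}{l} = u \neq \bot$, $u \in T'$).} Let $P = \projThd{T'}{t}$, which is a prefix of $Q = \projThd{T}{t}$ by \cond{CRP-PO}. Since $u \in T'$ lies in thread $t$, we have $u \in P$. In $Q$ the event $l$ precedes $u$, so $l \in P$ as well, because $P$ is a prefix of $Q$. The segment of $P$ from $l$ up to and including $u$ coincides with the corresponding segment of $Q$. By Step 1, $u$ is the first $(t,\unlockE{m})$ after $l$ in $Q$, hence also in $P$. Therefore $\exit{T'}{l} = \exit{P}{l} = u$.

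\emph{Step 3 (case $\exit{T}{l} = \bot$).} By Step 1, $Q$ contains no $(t,\unlockE{m})$ after $l$. If $l \notin T'$, the result $\exit{T'}{l} = \bot$ is immediate. Otherwise $l \in P$, and since $P$ is a prefix of $Q$, every event after $l$ in $P$ is also an event after $l$ in $Q$. So $P$ contains no matching unlock either, and $\exit{T'}{l} = \exit{P}{l} = \bot$.

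The main obstacle is Step 1, i.e.\ making precise that $\exitH{}{}$ ignores other threads, together with the bookkeeping that $l \in T'$. For the latter I would argue via the prefix property as above rather than via the temporal order in $T'$.
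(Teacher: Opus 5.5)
Your proposal is correct and follows the paper's proof. The paper likewise rests on (1) $\exit{T}{l} = \exit{\projThd{T}{t}}{l}$ ``by construction'' and (2) \cond{CRP-PO} making $\projThd{T'}{t}$ a prefix of $\projThd{T}{t}$, and reads off both cases from these; you simply make explicit the induction behind (1), the prefix bookkeeping, and the convention $\exit{T'}{l} = \bot$ when $l \notin T'$, all of which the paper leaves implicit.
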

\begin{proof}
  By construction we find that
  (1) $\exit{T}{l} = \exit{\projThd{T}{\thdSym}}{l}$ and
  (2) for any $T' \in \crp{T}$ we have that $\projThd{T'}{\thdSym}$
  is a prefix of $\projThd{T}{\thdSym}$.
  Hence, if $\exit{T}{l} = \bot$ we immediately find
  that $\exit{T'}{l} = \bot$ for any $T' \in \crp{T}$.

  Consider the case that $\exit{T}{l} \ne \bot$.
  Consider $T' \in \crp{T}$ where $\exit{T}{l} \in T'$.
  Then, from (1) and (2) we can immediately follow that
  $\exit{T}{l} = \exit{T'}{l}$.
\end{proof}

We can thus argue that critical sections are uniquely identified by
their entry and exit points.
An event is part of a critical
section if the event is enclosed by the entry
and its corresponding exit point (if defined).
In such a situation we say that the event is protected by the lock
that is acquired at the entry point.
The set of locks that protect an event is commonly
referred to as the \emph{lock set}.
The important question is how to check if an event is 'enclosed'
by another event.

Next, we review the existing per-thread definition of a critical section and
the resulting lock set construction
where the check for enclosed events only considers
events that are in the same thread as the exit point.
We point out the semantic incompleteness of this assumption
and introduce a more general trace-based characterization
that omits the per-thread restriction and
is shown to be semantically complete.
To establish correctness and completeness, we employ
the following characterization of \emph{lock protection}
that  does not refer to critical sections explicitly.

\begin{defi}[Lock Protection]
\label{def:lock-protection}
  Let $T$ be a well-formed trace with event $e \in T$.
  We say that \emph{$e$ is protected by lock $m$} if
  there exists lock event $l=(\thdSym,\lockE{m}) \in T$
  such that the following two conditions are satisfied:
  \begin{description}
  \item[\cond{LP1}]
      For all $T' \in \crp{T}$ where
       $l, e \in T'$ we have that $l \TrLt{T'} e$.
     \item[\cond{LP2}]
         There is no unlock event $u = (\thdSym,\unlockE{m}) \in T$
         such that for some $T' \in \crp{T}$ where
       $l,u,e \in T'$ we have that $l \TrLt{T'} u \TrLt{T'} e$.
  \end{description}
\end{defi}

Case \cond{LP1} guarantees that under all schedules
the event~$e$ is preceded by the event that acquires lock~$m$.
Case \cond{LP2} guarantees that the lock~$m$ is not released in any
of these schedules.

\subsection{Per-Thread Definition}
The following definition of a critical section and lock set is commonly employed in the literature.

\begin{defi}[Per-Thread Critical Section]
  Let $T$ be a well-formed trace with
  lock event $l=(\thdSym,\lockE{m})\in T$.
  We say that $e \in T$ is
  in the \emph{per-thread critical section guarded by $l$ in thread $\thdSym$},
  written $e \in \PerThreadCS{T}{\thdSym}{l}$,
  if one of the following two conditions apply:

  \begin{description}
  \item[\cond{OPEN-PT}]
     $\exit{T}{l} = \bot$ and
     $\projThd{T}{\thdSym} = T_1 \conc [l] \conc T_2 $
    for some traces~$T_1$ and $T_2$ where~$e \in T_2$.
  \item[\cond{CLOSED-PT}]
         $\exit{T}{l} \ne \bot$ and
        $\projThd{T}{\thdSym} = T_1 \conc [l] \conc T_2 \conc [\exit{T}{l}] \conc T_3$
    for some traces~$T_1$, $T_2$ and $T_3$ where  $e \in T_2$.
  \end{description}
\end{defi}
The above states that an event $e$ is in the per-thread critical section
guarded by~$l$ if $e$ and $l$ belong to the same thread
and $e$ occurs after $l$ and
if $\exit{T}{l}$ is defined, $e$ occurs before $\exit{T}{l}$ in the trace.
This means that the check if $e$ is enclosed
by some critical section guarded by~$l$ only considers
events that are in the same thread as~$l$.

\begin{defi}[Per-Thread Locks Held]
Let $T$ be a well-formed trace with an event $e =(\thdSym,op) \in T$.
  The \emph{per-thread lock set for an event $e\in T$ in thread $\thdSym$} is defined by
  $\PerThreadLH{T}{\thdSym}{e} =
  \{ m \mid \exists l = (\thdSym,\lockE{m}) \in T.
        e \in \PerThreadCS{T}{\thdSym}{l} \}$.
\end{defi}

The superscript $\thdSym$ in $\PerThreadCS{T}{\thdSym}{l}$
and $\PerThreadLH{T}{\thdSym}{e}$ highlights that the above definitions
are restricted to events that are in the same thread~$\thdSym$.

\begin{exa}
  For trace~$T_1$ in \cref{fig:trace-mt1}
  and events $e_3 = (\thread{2}, \lockE{m_2}), e_8 = (\thread{3}, \lockE{m_2}) \in T_1$
  we find
  that $\PerThreadLH{T_1}{\thread{2}}{e_3} = \{ m_1 \}$
  and $\PerThreadLH{T_1}{\thread{3}}{e_8} = \{ \}$.
\end{exa}

Each lock in the per-thread lock set protects the associated event
as shown by the following result.

\begin{lem}[Semantic Correctness of Per-Thread Lock Set]
  \label{lem:per-thread-correct}
  Let $T$ be a well-formed trace with some event $e=(\thdSym,op) \in T$
  where $m \in \PerThreadLH{T}{\thdSym}{e}$.
  Then, $e$ is protected by lock $m$.
\end{lem}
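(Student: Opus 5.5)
From $m \in \PerThreadLH{T}{\thdSym}{e}$ I obtain a lock event $l=(\thdSym,\lockE{m})\in T$ with $e \in \PerThreadCS{T}{\thdSym}{l}$. I then take this same $l$ as the witness for Definition~\ref{def:lock-protection} and check \cond{LP1} and \cond{LP2} separately. The key observation throughout is that $l$ and $e$ lie in the same thread $\thdSym$, and that in $\projThd{T}{\thdSym}$ the event $l$ occurs strictly before $e$. This holds in both \cond{OPEN-PT} and \cond{CLOSED-PT}, because $e \in T_2$ and $T_2$ follows $[l]$. So everything reduces to reasoning about the projection onto $\thdSym$ and using \cond{CRP-PO}.

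For \cond{LP1}, I take any $T' \in \crp{T}$ with $l,e \in T'$. By \cond{CRP-PO}, $\projThd{T'}{\thdSym}$ prefixes $\projThd{T}{\thdSym}$. Relative order of events of thread $\thdSym$ is preserved in any prefix, and events are unique by their identifiers. Hence $l$ precedes $e$ in $\projThd{T'}{\thdSym}$. Since projection preserves trace order, $l \TrLt{T'} e$.

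For \cond{LP2}, I suppose toward a contradiction that there is an unlock $u=(\thdSym,\unlockE{m})\in T$ and some $T'\in\crp{T}$ with $l,u,e\in T'$ and $l \TrLt{T'} u \TrLt{T'} e$. Again all three events lie in thread $\thdSym$, so by \cond{CRP-PO} their order in $T'$ matches their order in $\projThd{T}{\thdSym}$; thus $u$ lies strictly between $l$ and $e$ in $\projThd{T}{\thdSym}$. I then split on the two cases of the per-thread definition.
\begin{itemize}
\item In case \cond{OPEN-PT}, $\exit{T}{l}=\bot$. As noted in the proof of Lemma~\ref{lem:stable-exit}, $\exit{T}{l}=\exit{\projThd{T}{\thdSym}}{l}$, and $\exitH{}{}$ returns the first $(\thdSym,\unlockE{m})$ after $l$. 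So $\exit{T}{l}=\bot$ means no such unlock follows $l$ in thread $\thdSym$, contradicting the existence of $u$.
\item In case \cond{CLOSED-PT}, $\exit{T}{l}$ is by construction the first $(\thdSym,\unlockE{m})$ event after $l$ in $\projThd{T}{\thdSym}$. Since $u$ is such an event and occurs after $l$, we get $\exit{T}{l} \TrLtEq{T} u$ within the projection. But $e \in T_2$ precedes $\exit{T}{l}$, hence also precedes $u$, contradicting that $u$ lies before $e$.
\end{itemize}

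The only mildly delicate step is this last one: making precise that $\exit{T}{l}$ is the \emph{nearest} matching unlock in thread $\thdSym$, and that orders in $T'$ among same-thread events coincide with those in $T$. I would state this once as a small auxiliary observation, namely that for events of a single thread present in $T'\in\crp{T}$, $\TrLt{T'}$ and $\TrLt{T}$ agree. It follows directly from \cond{CRP-PO} and projection, and gives both \cond{LP1} and the reduction in \cond{LP2}. Well-formedness itself (beyond \cond{CRP-PO}) is not needed.
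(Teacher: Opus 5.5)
Your proof is correct and shares the paper's skeleton: the same witness $l$, \cond{LP1} from \cond{CRP-PO}, and \cond{LP2} by contradiction with a split on \cond{OPEN-PT}/\cond{CLOSED-PT}. You discharge \cond{LP2} by a different route, though. The paper reasons inside $T'$: it re-selects $u$ as the unlock nearest to $l$, concludes $\exit{T'}{l}=u$, and pulls this back to $\exit{T}{l}=u$ via Lemma~\ref{lem:stable-exit}. The contradiction is then with $\bot$, or with $e$ preceding $\exit{T}{l}$ in thread $\thdSym$. You instead first transfer $l \TrLt{T'} u \TrLt{T'} e$ back to $T$, which is valid because all three events lie in thread $\thdSym$. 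You then use only that $\exit{T}{l}$ is the first $(\thdSym,\unlockE{m})$ event after $l$ in $T$.

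Your version is more elementary and self-contained. It needs neither the stability lemma, nor the re-selection of $u$, nor well-formedness of $T'$. It also avoids a step the paper leaves implicit. Lemma~\ref{lem:stable-exit} as stated transfers exit points from $T$ to $T'$ under the hypothesis $\exit{T}{l}\in T'$. Using it in the converse direction, from $\exit{T'}{l}=u$ to $\exit{T}{l}=u$, requires showing that $\exit{T}{l}$ lies in the prefix $\projThd{T'}{\thdSym}$, and that is essentially your minimality argument.

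The paper's phrasing via exit-point stability is chosen so the same template can be cited (``similar reasoning'') in the proof of Lemma~\ref{lem:trace-based-complete}. Your shortcut, by contrast, leans on $e$ lying in thread $\thdSym$, which is exactly what the per-thread definition provides.
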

\begin{proof}
  By assumption, there exists $l=(\thdSym,\lockE{m}) \in T$
  and $e \in \PerThreadCS{T}{\thdSym}{l}$.

  \mbox{} \\
  \noindent
  {\bf Case $\exit{T}{l} \ne \bot$:}
  Then, (1) $\projThd{T}{\thdSym} = T_1 \conc [l] \conc T_2 \conc [\exit{T}{l}] \conc T_3$
  for some traces~$T_1$, $T_2$ and $T_3$ where~$e \in T_2$.

  Consider $T' \in \crp{T}$ where~$l,e \in T'$.
  Each $T' \in \crp{T}$ maintains the program order condition \cond{CRP-PO}.
  Hence, from (1) we derive that $l \TrLt{T'} e$.
  This establishes condition~$\cond{LP1}$.

  We establish the condition $\cond{LP2}$ by contradiction.
  Suppose there exists some unlock event $u = (\thdSym,\unlockE{m}) \in T$
  such that for some $T' \in \crp{T}$ where~$l,u,e \in T'$ we have that (2) $l \TrLt{T'} u \TrLt{T'} e$.
  We choose the unlock event $u$ that is nearest to~$l$
  with no lock event on the same lock in between.
  The well-formedness assumption guarantees that such an unlock event exists.
  Hence, $\exit{T'}{l} = u$.
  By Lemma~\ref{lem:stable-exit} we find that $\exit{T}{l} = u$.

  By assumption (1), event $e$ appears before $\exit{T}{l}$
  but from (2) we obtain that $e$ appears after $\exit{T}{l}$.
  Events $e$ and $\exit{T}{l}$ belong to the same thread.
  Hence, the program order condition~\cond{CRP-PO} is violated
  and we achieve the desired contradiction.

  \mbox{} \\
  \noindent
  {\bf Case $\exit{T}{l} = \bot$:}
  We find that (3) $\projThd{T}{\thdSym} = T_1 \conc [l] \conc T_2 $
  for some traces~$T_1$ and $T_2$ where~$e \in T_2$.

  Consider $T' \in \crp{T}$ where $l,e \in T'$.
  Each $T' \in \crp{T}$ maintains the program order condition \cond{CRP-PO}.
  Hence, from (3) we derive that $l \TrLt{T'} e$.
  This establishes condition~$\cond{LP1}$.

    We establish the condition $\cond{LP2}$ by contradiction.
  Suppose there exists some unlock event $u = (\thdSym,\unlockE{m}) \in T$
  such that for some $T' \in \crp{T}$ where
  $l,u,e \in T'$ we have that (2) $l \TrLt{T'} u \TrLt{T'} e$.
  We choose the unlock event $u$ that is nearest to~$l$
  with no lock event on the same lock in between.
  The well-formedness assumption guarantees that such an unlock event exists.
  Hence, $\exit{T'}{l} = u$.
  By Lemma~\ref{lem:stable-exit} we find that $\exit{T}{l} = u$.
  This contradicts the assumption that $\exit{T}{l} = \bot$.

  For both cases we can establish conditions $\cond{LP1}$ and $\cond{LP2}$.
  Thus, we can conclude that~$e$ is protected by lock~$m$.
\end{proof}

However, the per-thread lock set definition is semantically incomplete.

\begin{exa}
\label{exa:per-thread-incomplete}
Consider trace $T_1$ in \cref{fig:trace-mt1} where
$e_{11} = \exit{T_1}{e_6}$.
Due to the fork-join dependency we find
that for all reorderings $T' \in \crp{T_1}$ where $e_{11} \in T'$
we have that event $e_8$ is always surrounded by events $e_6$ and $e_{11}$.
That is, $e_6 \TrLt{T'} e_8 \TrLt{T'} e_{11}$.
This means that $e_8$ is protected by lock~$m$, however,
$m \not\in \PerThreadLH{T_1}{e_8}$ because $e_8$ arises in
a different thread than~$e_6$.
\end{exa}

\subsection{Trace-Based Characterization}

To guarantee that the lock set is semantically complete,
we give a trace-based characterization that lifts the per-thread restriction
when checking if an event is enclosed by another event.
For this purpose, we introduce the following ordering relation.
Let $e,f$ be two events in trace $T$.
We say event $f$ \emph{must be preceded}
by event~$e$ \emph{under all trace reorderings of $T$},
written $e \crpLt{T} f$, if $\forall T' \in \crp{T}$ where $f \in T'$
we have that $e\in T'$ and $e \TrLt{T'} f$.

\begin{defi}[Critical Section]
\label{def:cs}
  Let $T$ be a well-formed trace with lock event $l=(\thdSym,\lockE{m}) \in T$.
  We say that $e \in T$ is
  in the \emph{critical section guarded by $l$},
  written $e \in \CS{T}{l}$,
  if  one of the following two conditions apply:
  \begin{description}
  \item[\cond{OPEN}]
     $\exit{T}{l} = \bot$ and $l \crpLt{T} e$.
  \item[\cond{CLOSED}]
         $\exit{T}{l} \ne \bot$ and $l \crpLt{T} e \crpLt{T} \exit{T}{l}$.
 \end{description}
\end{defi}
The above states that an event $e$ is part of a critical section
guarded by~$l$ if the event is $e$ is enclosed
by the critical section's entry point~$l$
regardless of the schedule of events.
If the exit point~$\exit{T}{l}$ exit, $e$ is also enclosed by~$\exit{T}{l}$.
Importantly, event $e$ does not need to arise in the same thread as~$l$.
Thus, the above definition can deal with critical sections
that cover multiple threads as seen by the example in~\cref{fig:trace-mt1}.

\begin{defi}[Locks Held]
Let $T$ be a well-formed trace with an event $e =(\thdSym,op) \in T$.
  The \emph{lock set for an event $e\in T$ in thread $\thdSym$} is defined by
  $\LH{T}{e} =
  \{ m \mid \exists l=(\thdSym,\lockE{m})  \in T.
        e \in \CS{T}{l} \}$.
\end{defi}

\begin{lem}[Semantic Correctness and Completeness of Lock Set]
\label{lem:trace-based-complete}
  Let $T$ be a well-formed trace with some event $e=(\thdSym,op) \in T$.
  Then, $m \in \LH{T}{e}$ iff
         $e$ is protected by lock $m$.
\end{lem}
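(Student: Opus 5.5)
We prove the two directions separately. Throughout we read the lock event $l$ in $\LH{T}{e}$ as ranging over lock events of \emph{any} thread, matching \cond{LP1}/\cond{LP2} and the motivating \cref{exa:per-thread-incomplete}. Restricting $l$ to the thread of $e$ would collapse $\LH{T}{e}$ to the per-thread case and make completeness fail.

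\emph{Correctness ($\Rightarrow$).} Assume $e \in \CS{T}{l}$ with $l=(\thdSymB,\lockE{m})$.

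For \cond{LP1}, note that $l \crpLt{T} e$ already says that every $T' \in \crp{T}$ containing $e$ contains $l$ with $l \TrLt{T'} e$.

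For \cond{LP2}, argue by contradiction, exactly as in the proof of Lemma~\ref{lem:per-thread-correct}. Suppose $u=(\thdSymB,\unlockE{m})$ and $T'\in\crp{T}$ satisfy $l \TrLt{T'} u \TrLt{T'} e$. Replace $u$ by the nearest unlock of $m$ in $\thdSymB$ after $l$; by \cond{WF-Acq} and \cond{WF-Rel} this gives $\exit{T'}{l}=u$. By Lemma~\ref{lem:stable-exit}, the case $\exit{T}{l}=\bot$ would force $\exit{T'}{l}=\bot$, which is impossible. So we are in case \cond{CLOSED} with $\exit{T}{l}=u$. Then $e \crpLt{T} u$ applied to $T'$ (which contains $u$) gives $e \TrLt{T'} u$, a contradiction.

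\emph{Completeness ($\Leftarrow$).} Let $l$ witness that $e$ is protected by $m$. We must show $l \crpLt{T} e$ and, if $u=\exit{T}{l}\neq\bot$, also $e \crpLt{T} u$.

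The ordering parts follow from \cond{LP1} and \cond{LP2}:
\begin{itemize}
\item If both $l$ and $e$ lie in some $T'$, then \cond{LP1} gives $l \TrLt{T'} e$.
\item If $l,u,e$ lie in some $T'$, then $l\TrLt{T'}u$ by \cond{CRP-PO}, so \cond{LP2} forbids $u \TrLt{T'} e$, hence $e\TrLt{T'}u$.
\end{itemize}
The real content is \emph{membership}: (a) every $T'\in\crp{T}$ with $e\in T'$ contains $l$, and (b) every $T'$ with $u\in T'$ contains $e$.

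We reduce both to a single extension lemma. \emph{If $T'\in\crp{T}$ and $f\in T\setminus T'$, then there is $T''\in\crp{T}$ such that $T'$ prefixes $T''$ and $f\in T''$.}

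Given this lemma, (a) follows. If $e\in T'$ but $l\notin T'$, extend $T'$ to some $T''$ containing $l$. Then $e \TrLt{T''} l$, contradicting \cond{LP1}.

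For (b), suppose $u\in T'$ but $e\notin T'$. If $l\notin T'$, then $u\notin T'$ by \cond{CRP-PO}, so we may assume $l\in T'$. Extend $T'$ to some $T''$ containing $e$. Then $l\TrLt{T''}u\TrLt{T''}e$, contradicting \cond{LP2}.

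\emph{Proving the extension lemma (main obstacle).} This is where I expect the difficulty, because the naive completion fails. The naive completion appends the missing events of $T$ in $T$-order. It can break \cond{WF-Acq}: $T'$ may end with some thread holding a lock $m'$ whose acquisition occurs in $T$ \emph{later} than a pending acquisition of $m'$.

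I would try the following construction. Work with the set $D$ of events that $f$ causally depends on in $T$, namely its thread predecessors, forks of its ancestors, and joined threads. Schedule $D\setminus T'$ by repeatedly firing an enabled event. For progress, show that a blocked acquisition of $m'$ can always be unblocked by running the holder's thread up to its matching unlock. That unlock exists in $T$, since the holder's lock event has a non-$\bot$ exit there by \cond{WF-Acq} applied in $T$. Termination would then follow by induction on the number of remaining events, using the well-formedness of $T$ to argue that no cyclic wait among holders can arise.

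If this general lemma proves too hard, a fallback is to prove (a) and (b) directly. For (a), take the prefix of $T'$ ending at $e$ and append the $T$-prefix of $\thdSymB$'s events up to $l$. This needs care only for lock events, and the locks involved are those of $\thdSymB$ and $m$.
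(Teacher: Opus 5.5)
Your reading of $\LH{T}{e}$, with $l$ ranging over all threads, is the necessary one. Your ``$\Rightarrow$'' direction is correct, and it is tighter than the paper's. The paper closes \cond{LP2} by asserting that $e$ and $\exit{T}{l}$ lie in the same thread, which holds only under the literal same-thread reading; you instead apply $e \crpLt{T} u$ directly. You are also right that the substance of ``$\Leftarrow$'' is membership. The paper's proof passes over it: it claims $l \crpLt{T} e$ follows ``immediately'' from \cond{LP1}, and it negates $e \crpLt{T} \exit{T}{l}$ as $\exit{T}{l} \TrLt{T'} e$.

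The gap is that your extension lemma is false, and so are the claims (a) and (b) it is meant to deliver. \cond{WF-Join1} only demands \emph{some} event of the joined thread, so $T' = [e_1,e_6,e_7,e_8,e_{10},e_{11}] \in \crp{T_1}$. No extension of $T'$ can contain $e_9$, because $e_9$ would follow $e_{10} = (\thread{1},\joinE{\thread{3}})$, violating \cond{WF-Join2}. This refutes (b). The event $e_9$ is protected by $\LKA$ via $l=e_6$: \cond{LP1} holds through the fork $e_7$, and \cond{LP2} holds because $e_{11}$ follows $e_{10}$, which by \cond{WF-Join2} follows $e_9$. Yet $e_9 \notin \CS{T_1}{e_6}$, because $T'$ contains $e_{11}$ but not $e_9$. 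Also $e_9 \notin \CS{T_1}{e_2}$, because $[e_1,e_6,e_7,e_8,e_9] \in \crp{T_1}$ omits $e_2$. So $\LKA \notin \LH{T_1}{e_9}$ under either reading.

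Your progress argument also breaks in exactly the situation you flagged. \cond{WF-Acq} in $T$ only yields an unlock in the thread of the \emph{earlier} acquisition, so a holder whose acquisition is the later one in $T$ may never release. For example, let $T$ consist of $\thread{1}$ forking $\thread{a}$ and $\thread{b}$, then $\thread{a}$ performing $\lockE{\LKA}, \lockE{m}, \unlockE{\LKA}$, then $\thread{b}$ performing $\lockE{\LKA}$ and $e = (\thread{b},\lockE{\LKB})$. The reordering consisting of the two forks and $\thread{b}$'s events contains $e$ but not $l = (\thread{a},\lockE{m})$, which refutes (a). Yet $e$ is protected by $m$. \cond{LP1} holds because $\thread{b}$ never releases $\LKA$ in $T$, which forces $\thread{a}$'s section on $\LKA$ (containing $l$) before $\thread{b}$'s; \cond{LP2} holds vacuously.

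Finally, your step ``hence $e \TrLt{T'} u$'' assumes $e \neq u$, and that case fails too. The event $e_{11}$ is protected by $\LKA$ via $e_6$, since the only candidate unlock is $e_{11}$ itself, but $e_{11} \crpLt{T_1} e_{11}$ is false. Hence ``$\Leftarrow$'' is not provable from the given definitions by any scheduling argument, and your fallback fails on the same examples. The statement needs amended definitions, for instance \cond{LP1} and \cond{LP2} that themselves require $l$, respectively $e$, to be present in the relevant reorderings, together with a separate treatment of $e = \exit{T}{l}$.
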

\begin{proof}

  \mbox{} \\
  \noindent ``$\Rightarrow$'':  We consider first the direction from left to right. We have that
  $m \in \LH{T}{e}$
  iff
  there exists $l=(\thdSym,\lockE{m})  \in T$ and $e \in \CS{T}{l}$.

  By assumption $l \crpLt{T} e$ and thus
  we can establish condition~$\cond{LP1}$.

    \mbox{} \\
  \noindent
  {\bf Case $\exit{T}{l} \ne \bot$:}
  We find that (1) $e \crpLtEq{T} \exit{T}{l}$.
  We establish condition~$\cond{LP2}$ by contradiction
  via similar reasoning as in the proof of Lemma~\ref{lem:per-thread-correct}.

  Suppose there exists some unlock event $u = (\thdSym,\unlockE{m}) \in T$
  such that for some $T' \in \crp{T}$ where
  $l,u,e \in T'$ we have that (2) $l \TrLt{T'} u \TrLt{T'} e$.
  We choose the unlock event $u$ that is nearest to~$l$
  with no lock event on the same lock in between.
  The well-formedness assumption guarantees that such an unlock event exists.
  Hence, $\exit{T'}{l} = u$.
  By Lemma~\ref{lem:stable-exit} we find that $\exit{T}{l} = u$.

  From (1) we derive that the event $e$ appears before $\exit{T}{l}$
  but from (2) we obtain that $e$ appears after $\exit{T}{l}$.
  Events $e$ and $\exit{T}{l}$ belong to the same thread.
  Hence, the program order condition~\cond{CRP-PO} is violated
  and we achieve the desired contradiction.

  \mbox{} \\
  \noindent
      {\bf Case $\exit{T}{l} = \bot$:}
      The reasoning that leads to a contradiction is exactly
      the same as in the proof of Lemma~\ref{lem:per-thread-correct}.

\mbox{} \\
\noindent ``$\Leftarrow$'':
We consider the direction from right to left.

From condition~$\cond{LP1}$ we can  immediately
derive that $l \crpLt{T} e$.
Consider the case that $\exit{T}{l} \ne \bot$.
It remains to show that $e \crpLt{T} \exit{T}{l}$.
Assume the contrary.

This means there exists $T' \in \crp{T}$
where $\exit{T}{l} \TrLt{T'} e$.
By construction we also find that $l \TrLt{T'} \exit{T}{l}$.
This contradicts condition~$\cond{LP2}$ and we are done.
\end{proof}

An immediate consequence is that
the per-thread lock set construction is a
strict subset of the trace-based lock set construction.

From Lemma~\ref{lem:per-thread-correct} and Lemma~\ref{lem:trace-based-complete}
we derive that $\PerThreadLH{T}{\thdSym}{e} \subseteq \LH{T}{e}$
for any $e \in T$.
The subset relation is strict as shown
by Example~\ref{exa:per-thread-incomplete}.

\section{Related Work}
\label{sec:related-work}

The notion of a critical section goes back to Dijkstra~\cite{DBLP:journals/cacm/Dijkstra65}. It is hard to find a mathematical definition.
We review earlier works in the context of program analysis
that rely on the notion of critical section and lock set.

\paragraph{Deadlock detection}

The use of lock set has emerged in the context of deadlock detection.
For example, see the earlier dynamic deadlock detection works by
Havelund~\cite{10.5555/645880.672085} and
Harrow~\cite{DBLP:conf/spin/Harrow00}.
There are numerous follow-up works, e.g.,
see~\cite{conf/hvc/BensalemH05,conf/pldi/JoshiPSN09,10.1145/3377811.3380367,Samak:2014:TDD:2692916.2555262,conf/ase/ZhouSLCL17,6718069,conf/oopsla/KalhaugeP18,conf/fse/CaiYWQP21,conf/pldi/TuncMPV23}.

Our results apply independently of how traces are obtained (e.g., via instrumentation, symbolic execution, or CFG reachability).
Static analysis works typically  consider the reachability graph.
There are numerous works that employ the lock set method
for static deadlock detection,
e.g., see~\cite{10.1145/2970276.2970309,
  10.1109/APSEC.2008.68,10.1109/ICSE.2009.5070538,10.1145/3540250.3549110,10.1145/2970276.2970309}.

\paragraph{Data race detection}

The notion of critical section and lock set is also employed
by works that cover data race detection.
For example, consider lock set based data race method~\cite{Savage:1997:EDD:265924.265927,serebryany2009threadsanitizer,xie2013acculock,yu2016simplelock+}
and
works that order conflicting critical sections~\cite{Smaragdakis:2012:SPR:2103621.2103702,DBLP:journals/corr/KiniM017,Roemer:2018:HUS:3296979.3192385-obsolete,10.1145/3360605,conf/mplr/SulzmannS20}.

In all these works, critical sections and lock sets are implicitly defined per thread.

\section{Conclusion}
\label{sec:conclusion}

Pretty much all prior works assume per-thread lock sets and critical
sections restricted to a single thread.
This assumption does not hold for general C/Pthread executions.
Critical sections that cover
multiple threads exist, can be observed in simple programs and invalidate the standard foundation.
Similar phenomena can occur in other thread-and-monitor models such as Java.
Correcting this semantic gap is important because lock sets form the foundation of numerous analyses for deadlock and data race detection.
We consider a minimal trace model for C/Pthread programs and
give corrected definitions based on a trace-based characterization of critical sections and lock
sets to deal with programs where locks protect events across thread boundaries.
Extending the semantics with memory operations and additional
synchronization primitives such as condition variables
introduces orthogonal challenges and is left for future work.



\newcommand{\etalchar}[1]{$^{#1}$}

\end{document}